\newcolumntype{L}[1]{>{\raggedright\let\newline\\\arraybackslash\hspace{0pt}}m{#1}}
\newcolumntype{C}[1]{>{\centering\let\newline\\\arraybackslash\hspace{0pt}}m{#1}}
\newcolumntype{R}[1]{>{\raggedleft\let\newline\\\arraybackslash\hspace{0pt}}m{#1}}
\newtheoremstyle{theoremstyle}
{10pt}      
{5pt}       
{\itshape}  
{}          
{\bfseries} 
{}          
{ }         
{}          
\newtheoremstyle{algorithmstyle}
{10pt}      
{5pt}       
{}          
{}          
{\bfseries} 
{}          
{ }         
{}          
\newtheoremstyle{examplestyle}
{10pt}      
{5pt}       
{}          
{}          
{\bfseries} 
{}          
{ }         
{}          
\theoremstyle{theoremstyle}
\newtheorem{theorem}{Theorem}[section]
\newtheorem*{theorem*}{Theorem}
\newtheorem{proposition}[theorem]{Proposition}
\newtheorem*{proposition*}{Proposition}
\newtheorem*{corollary*}{Corollary}
\theoremstyle{examplestyle}
\newtheorem{example}[theorem]{Example}
\newtheorem{definition}[theorem]{Definition}
\newtheorem{definition*}{Definition}
\newtheorem{remark*}{Remark}
\newcommand{\RR }{\mathbb{R}}
\newcommand{\W}{{\rm W}}
\newcommand{\suchthat}{\;\ifnum\currentgrouptype=16 \middle\fi|\;}
\newcommand{\bigmid}{\left.\vphantom{\Big\{} \suchthat \vphantom{\Big\}}\right.}
\newcommand\irregularcircle[2]{
	\pgfextra {\pgfmathsetmacro\len{(#1)+rand*(#2)}}
	+(0:\len pt)
	\foreach \a in {10,20,...,350}{
		\pgfextra {\pgfmathsetmacro\len{(#1)+rand*(#2)}}
		-- +(\a:\len pt)
	} -- cycle
}
\DeclareMathOperator*{\mini}{minimize}
\DeclareMathOperator*{\st}{subject~to}
\date{}
\title[Cooperativity, Absolute Interaction, and Algebraic Optimization]{Cooperativity, Absolute Interaction,\\ and Algebraic Optimization}
\author{Nidhi Kaihnsa}
\address{Max Planck Institute for Mathematics in the Sciences\\
  Inselstra{\ss}e 22\\
  04103 Leipzig\\
  Germany
}
\email{nidhi.kaihnsa@mis.mpg.de}
\urladdr{https://personal-homepages.mis.mpg.de/kaihnsa/}
\author{Yue Ren}
\address{Max Planck Institute for Mathematics in the Sciences\\
  Inselstra{\ss}e 22\\
  04103 Leipzig\\
  Germany
}
\email{yue.ren@mis.mpg.de}
\urladdr{https://yueren.de}
\author{Mohab Safey El Din}
\address{Sorbonne Universit\' e, CNRS, INRIA, Laboratoire d’Informatique de Paris 6, LIP6, \'Equipe \textsc{PolSys}
}
\email{mohab.safey@lip6.fr}
\urladdr{https://www-polsys.lip6.fr/~safey/}
\author{Johannes W. R. Martini}
\address{KWS SAAT SE\footnote{this research project is not associated with KWS SAAT SE}\\
  Grimsehlstraße 31\\
  37574 Einbeck\\
  Germany.
}
\email{jwrmartini@gmail.com}
\begin{document}

\maketitle

\begin{abstract} 
We consider a measure of cooperativity based on the minimal absolute interaction required to generate an observed titration behavior.
We describe the corresponding algebraic optimization problem and show how it can be solved using the nonlinear algebra tool \texttt{SCIP}.
Moreover, we compute the minimal absolute interactions for various binding polynomials that describe the oxygen binding of various hemoglobins under different conditions.
While calculated minimal absolute interactions are consistent with the expected outcome of the chemical modifications, it ranks the cooperativity of the molecules differently than the maximal Hill slope.
\end{abstract}

\section{Introduction}
Interaction between components is a key concept of biological systems.
While characteristics of a system of independent subunits are determined by the description of each element alone, a complex system with interactions is more than the sum of its parts. A classical example of a small biological system with non-trivial interaction is hemoglobin with its four binding sites for oxygen or other ligands such as carbon monoxide \cite{bohr1904ueber,barcroft1913combinations,hill1913combinations}. The overall binding curve, as a function of oxygen concentration at constant temperature (``isotherm''), is
much steeper than an overall binding curve generated by a system of independent sites can be. Hemoglobin ``prefers'' the extreme states of being fully (un)occupied to intermediate states with one, two or three ligands bound.
This dependent binding behavior or an ``abnormal" steepness of a response curve is described by the concept of ``cooperativity''. Cooperativity is not only considered to be important for
(oxygen) transport, but for instance also for the formation of multi-protein complexes \cite{roy2017cooperative}, general signal transduction processes \cite{salakhieva2016kinetic,lenaerts2009information,martini2016model} and the regulation of noise \cite{gutierrez2009role,monteoliva2013noise}.

Observations on the macroscopic level, that is on how many sites are occupied have often been connected to microscopic behavior that is which sites are occupied and how certain sites interact (for a review see \cite{stefan2013cooperative}).
However, it has been shown that in the framework of the grand canonical ensemble of statistical mechanics, different definitions of cooperative binding are not coinciding when general asymmetric systems with more than two binding sites are considered \cite{martini2016cooperative,martini2017relation}. In particular, macroscopic phenomena which have been attributed to positive cooperativity in literature can be caused by a system being negative cooperative on the microcopic level \cite{martini2016cooperative}. 
\\

 For the characterization of the macroscopic binding behavior, roots of the grand partition function (also called the ``binding polynomial'') have been the objects of investigations \cite{briggs1983new,briggs1984cooperativity,briggs1985relationship}. In particular, it has been pointed out that the criterion of having non-real roots is  generally the relevant aspect indicating that a binding process relies on non-negligible interaction \cite{martini2016cooperative}. In case that the binding polynomial has only real roots, these roots define hypothetical single-site-systems whose sum give the initial overall curve \cite{onufriev2001novel,onufriev2004decomposing,martini2013mathematical}. The macroscopic behavior of the system could thus theoretically be engineered by using these independent sites which suggests that the potentially present intrinsic interaction is not essential for the macroscopic functionality of the system. Conversely, in case that the binding polynomial has non-real roots, the overall system cannot be represented as a decoupled system consisting of independent ``real'' binding sites suggesting that the intrinsic interaction is essential for the macroscopic behavior of the system. This qualitative definition raises the question of how to quantify the cooperativity of a system.  \\

The quantification of cooperativity has predominantly been discussed for symmetric systems consisting of physically indistinguishable sites \cite{abeliovich2016hill,rong2018quantification,rong19}. In particular, in the framework of the grand canonical ensemble, the molecule consisting of identical sites can be easily inferred from the coefficients of the binding polynomial and the corresponding interaction energies can be determined \cite{martini2016cooperative}. Moreover, it has been proposed to use the absolute minimal interaction which is required to generate the overall binding behavior as a general measure of cooperativity when the binding sites are not considereed to be physically identical \cite{martini2017measure}. In this general situation without additional side conditions, an observed overall titration behavior can be explained by an infinite number of different hypothetical molecules. A central question to characterize cooperativity can then be what the minimal absolute interaction across all these molecules is and in particular whether it is zero.
This quantitative measure of cooperativity has been defined and upper bounds based on polynomial factorization or on the interaction of a symmetric system, have been derived \cite{martini2017measure}.
However, it has not been clear how to explicitly calculate this measure of cooperativity for larger systems. \\

In this work, we show that looking for the minimal interaction energy required to generate a given overall titration behavior, leads to an algebraic optimization problem which can be tackled using the software {\tt SCIP} \cite{SCIP}. We describe the mathematical problem and calculate examples for binding polynomials from native hemoglobins and some modified variants \cite{connelly1986analysis,ikeda1983thermodynamic,imai1973analyses}. For our examples, the minimal interaction reduces when the molecule structure of hemoglobin is disturbed.
However, our results rank the degree of cooperativity of chemically treated hemoglobines differently than the maximal Hill slope.


\section{Recapitulation of the mathematical framework}\label{sectionmathframework}

In this section, we briefly review the model for ligand binding based on the grand canonical ensemble of statistical mechanics and recall the notion of minimal absolute interaction from \cite{martini2017measure}. We refer the reader to  \cite{ben2013cooperativity,wyman1990binding,hill2013cooperativity} for detailed exposition on the model.

\begin{definition}[Molecule]
  A \emph{molecule} $\W$ with $n$ sites is a positive real point, whose coordinates are indexed by the subsets of $[n]:=\{1,\ldots,n\}$ and with $w_\emptyset=1${\rm :}
  \[ \W = (w_I)_{I\subseteq [n]} \in (\mathbb \RR_{>0})^{2^n}. \]
  We refer to $w_I$ as \emph{binding energies} if $|I|=1$ and \emph{interaction energies} if $|I|>1$.
  For the sake of brevity, we abbreviate $w_{\{i_1,\ldots,i_r\}}$ to $w_{i_1\ldots i_r}$ for $i_1<\dots<i_r$ (see Fig.~\ref{fig:moleculeFourSites}).
\end{definition}

The binding energy $w_i$ encodes the difference between free energy of the
molecule in the completely unoccupied state and the state with only a ligand
bound to site $i$. The interaction energy $w_{i_1 i_2}$ is the
difference between free energy when sites $i_1,i_2$ are occupied and the sum of
the differences of energy encoded by $w_{i_1}$ and $w_{i_2}$. Analogously, $w_{i_1 \dots i_r}$ encodes the discrepancy between the energy required for $r$ ligands to bind to the sites $i_1,\dots,i_r$ and the energy already defined by the interactions of lower degree.
Note that we may call $w_{i_1 \dots i_r}$ energies even though they actually represent exponentials of the physical quantity ``energy'' \cite{ben2013cooperativity,wyman1990binding,schellman1975macromolecular}.



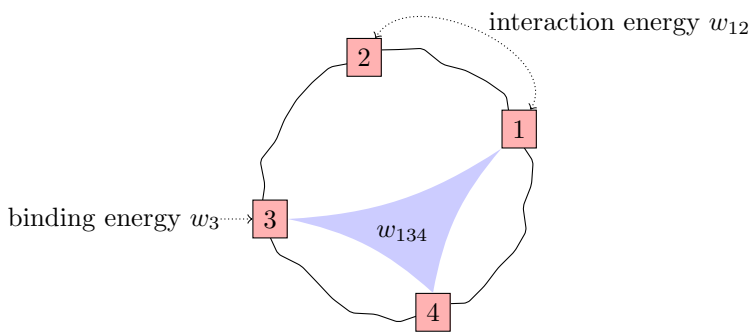
\begin{figure}[h]
  \centering
  \begin{tikzpicture}[scale=1,every node/.style={font=\footnotesize}]
    \pgfmathsetseed{13371338}
    \coordinate (o) at (0,0);
    \draw[rounded corners=.5mm] (o) \irregularcircle{1.75cm}{1mm};
    \draw (25:1.75cm) node[draw,fill=red!30] (s1) {1};
    \draw (105:1.75cm) node[draw,fill=red!30] (s2) {2};
    \draw (195:1.75cm) node[draw,fill=red!30] (s3) {3};
    \draw (285:1.75cm) node[draw,fill=red!30] (s4) {4};

    \node[anchor=east,xshift=-0.5cm] (aText) at (s3) {binding energy $w_3$};
    \draw[->,densely dotted] ($(aText.east)+(-0.15,0)$) -- (s3);
    \draw[<->,densely dotted] (s1) to[out=60,in=60] node[anchor=south west,xshift=0cm,yshift=-2mm] {interaction energy $w_{12}$} (s2);
    \path[fill=blue!20] (s3.east) to[bend left=15] (s4.north) to[bend left=15] (s1.south west) to[bend left=15] (s3.east);
    \node[anchor=south,xshift=-4mm,yshift=8mm] at (s4) {$w_{134}$};
  \end{tikzpicture}
  \caption{A molecule with 4 binding sites.}
  \label{fig:moleculeFourSites}
\end{figure}



\begin{definition}[Binding Polynomial]
  Given a molecule $\W=(w_I)_{I\subseteq [n]}$ with $n$ sites, its {binding polynomial} $\Phi(\W)$ is a univariate polynomial of degree $n$,
  \begin{equation*}\label{bindingpol}
   \Phi(\W):=a_n \Lambda^n + \dots + a_1 \Lambda + a_0,
   \end{equation*}
  whose positive real coefficients $a_k$ are given by
  \begin{equation*}
   \label{eq:bindingPolynomialCoefficients}
    a_k := \sum_{|I|=k} \prod_{I'\subseteq I} w_{I'} \in \RR_{>0}.
  \end{equation*}
\end{definition}

\begin{example}\label{ex:background}\rm
  Let $\W=(w_I)_{I\subseteq[3]}\in (\RR_{>0})^{2^3}$ be a molecule with $3$ sites. The binding polynomial $\Phi(\W)=a_3\Lambda^3+a_2\Lambda^2+a_1\Lambda+a_0$ is a real univariate polynomial of degree~$3$ whose coefficients are given by
  \begin{align*}
    a_0 &= w_\emptyset = 1,\\
    a_1 &= w_1+w_2+w_3,\\
    a_2 &= w_1w_2w_{12}+w_1w_3w_{13}+w_2w_3w_{23},\\
    a_3 &= w_1w_2w_3w_{12}w_{13}w_{23}w_{123}.
  \end{align*}
  Note that two different molecules may have the same binding polynomial and thus the same binding curve. Hence, the map $\Phi$ which maps a molecule with $n$ sites to its binding polynomial is not injective.
\end{example}

%



\begin{definition}[Absolute interaction]\label{def:norm}
  The \emph{absolute interaction} of a molecule $\W=(w_I)\in (\RR_{>0})^{2^n}$ is given by
  \begin{equation*}
    \| \W \| := \prod_{|I|>1} \max\Big(w_I,w_I^{-1}\Big).
  \end{equation*}
\end{definition}

Since $w_I$ represents the exponential of a difference of actual energies, which can be positive or negative, $\max(w_I,w_I^{-1})$ represents the exponential of the absolute value of that difference. Hence, the absolute interaction $\|\W\|$ represents the exponential of the sum of the absolute values of all energy differences.
The minimal value $\| \W \| $ can adopt is $1$ which corresponds to a minimal physical interaction energy of $0$. 

\begin{definition}[Minimal absolute interaction]
  Given a binding polynomial $P$ of degree $n$, the \emph{minimal absolute interaction} is
  \begin{equation*}
    \|P\| := \min \Big\{ \| \W \| \bigmid \W\in (\RR_{>0})^{2^n} \text{ with } \Phi(\W)=P \Big\}.
  \end{equation*}
  We call a molecule $W$ \emph{minimal}, if $\|W\| = \|\Phi(W)\|$.
\end{definition}

In other words, the minimal absolute interaction $\|P\|$ is the minimum of all absolute interactions of molecules with a fixed binding polynomial $P$, or equivalently with a fixed binding curve.
This notion is well defined by the following proposition.

\begin{proposition}[{\cite[\S4]{martini2017measure}}]
  For any univariate polynomial $P$ of degree $n$ with positive coefficients there exists a molecule $\W\in (\RR_{>0})^{2^n}$ with $n$ sites such that $\Phi(\W)=P$ and $\|\W\|=\|P\|$.
\end{proposition}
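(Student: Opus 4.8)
The statement is an existence/compactness claim: the infimum defining $\|P\|$ is attained. The natural strategy is to show that the fiber $\Phi^{-1}(P) = \{\W \in (\RR_{>0})^{2^n} : \Phi(\W)=P\}$, while not itself compact, contains a compact subset on which the continuous function $\W \mapsto \|\W\|$ attains its minimum, and that this minimum over the compact subset equals the infimum over the whole fiber. So the first step is to fix a binding polynomial $P = a_n\Lambda^n + \dots + a_0$ with $a_k > 0$, note that $\Phi^{-1}(P)$ is nonempty (this is exactly the factorization-into-a-symmetric-molecule construction referenced from \cite{martini2017measure}, or can be seen directly since one can realize $P$ by a symmetric molecule), and let $m := \inf\{\|\W\| : \W \in \Phi^{-1}(P)\} \geq 1$.

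\textbf{Carving out a compact set.} Pick any molecule $\W^{(0)} \in \Phi^{-1}(P)$ and set $C := \|\W^{(0)}\|$. It suffices to minimize over $K := \{\W \in \Phi^{-1}(P) : \|\W\| \leq C\}$, since molecules outside $K$ cannot beat $C \geq m$. The key is that $K$ is compact. It is closed, being the intersection of the closed set $\{\|\W\| \leq C\}$ with the closed set $\Phi^{-1}(P)$ (closed because $\Phi$ is continuous — indeed polynomial — in the $w_I$). For boundedness I would argue coordinate by coordinate. The constraint $\|\W\| = \prod_{|I|>1}\max(w_I, w_I^{-1}) \leq C$ forces, for every $I$ with $|I|>1$, the bound $C^{-1} \leq w_I \leq C$, since each factor is $\geq 1$ and their product is $\leq C$. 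This bounds all the interaction-energy coordinates away from $0$ and $\infty$. For the binding energies $w_i$ with $|I|=1$: from the formula $a_1 = \sum_i w_i$ with all $w_i > 0$ we get $0 < w_i < a_1$ immediately, so these are bounded above; and they are bounded below away from zero using a higher coefficient, e.g. $a_n = \prod_i w_i \cdot \prod_{|I|>1} w_I$, which combined with the upper bounds on all other factors gives a positive lower bound on each $w_i$. Hence $K$ is a closed and bounded subset of the open orthant, i.e. compact.

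\textbf{Conclusion and the main obstacle.} Since $\W \mapsto \|\W\|$ is continuous (a finite product of the continuous functions $\max(w_I, w_I^{-1})$) and $K$ is compact and nonempty (it contains $\W^{(0)}$), the minimum is attained at some $\W^\ast \in K$, and by construction $\|\W^\ast\| = m = \|P\|$, with $\Phi(\W^\ast) = P$. The step I expect to require the most care is the boundedness of $K$: one must make sure that \emph{every} coordinate of $\W$ is controlled, and the binding energies $w_i$ are not directly constrained by $\|\W\| \leq C$, so the argument must genuinely use the equations $\Phi(\W) = P$ (the coefficient identities for $a_1$ and $a_n$, or more robustly the elementary-symmetric-type relations among the $a_k$) together with positivity of all $w_I$ to pin them down. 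A secondary point worth stating cleanly is the nonemptiness of $\Phi^{-1}(P)$, which should be invoked from \cite{martini2017measure} or dispatched by exhibiting the explicit symmetric molecule whose binding polynomial is $P$.
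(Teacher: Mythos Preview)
The paper does not actually prove this proposition; it is quoted from \cite[\S4]{martini2017measure} and no argument is given here, so there is nothing in the paper to compare your approach against.

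Your compactness argument is the natural one and is correct in substance. One phrasing point deserves tightening: ``closed and bounded subset of the open orthant, i.e.\ compact'' is not literally valid, since closed and bounded subsets of a non-closed ambient set need not be compact. What your bounds actually establish is the stronger statement that $K$ lies in a product of closed intervals $\prod_{I}[\varepsilon_I,M_I]\subset(\RR_{>0})^{2^n}$ with $\varepsilon_I>0$; on this compact box both $\Phi$ and $\|\cdot\|$ are continuous, so $K$ is a closed subset of a compact set and hence compact. Your coordinate bounds do exactly this (interaction energies via $\|\W\|\le C$, binding energies via $a_1=\sum_i w_i$ for the upper bound and $a_n=\prod_{I\neq\emptyset} w_I$ for the lower bound), so the fix is purely cosmetic. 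Note also that a binding polynomial has $a_0=1$ by definition, so for a general $P$ with positive coefficients you should normalize by $a_0$ before invoking the symmetric-molecule construction for nonemptiness of $\Phi^{-1}(P)$.
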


Since cooperativity is an emerging property of the interaction between the sites and deeply connected to the binding curve, the minimal absolute interaction of a binding polynomial is a natural candidate for quantifying cooperativity. In addition, \cite[\S4]{martini2017measure} shows that it has several properties that could be considered desirable in such a quantifier.

\begin{example}\label{example3site}\rm
Consider the following binding polynomials:
  \begin{align*}
    P_1:=&4\Lambda^3+3\Lambda^2+2\Lambda+1,\\
    P_2:=&6\Lambda^3+7\Lambda^2+4\Lambda+1=(2\Lambda+1)(3\Lambda^2+2\Lambda+1),\\
    P_3:=&6\Lambda^3+11\Lambda^2+6\Lambda+1=(\Lambda+1)(2\Lambda+1)(3\Lambda+1).
  \end{align*}
  A brief computation reveals that the minimal absolute interactions are
  \[ \|P_1\|=13.50,\qquad \|P_2\|=3, \qquad \|P_3\|=1. \]
  The computation for $\|P_1\|$ is explicitly shown in Example~\ref{example3dim}. Figure~\ref{fig:minimalMolecules} illustrates the minimal molecules for each binding polynomial. Since $P_3$ factorizes into three real linear factors, its minimal molecule has only trivial interactions.
\end{example}

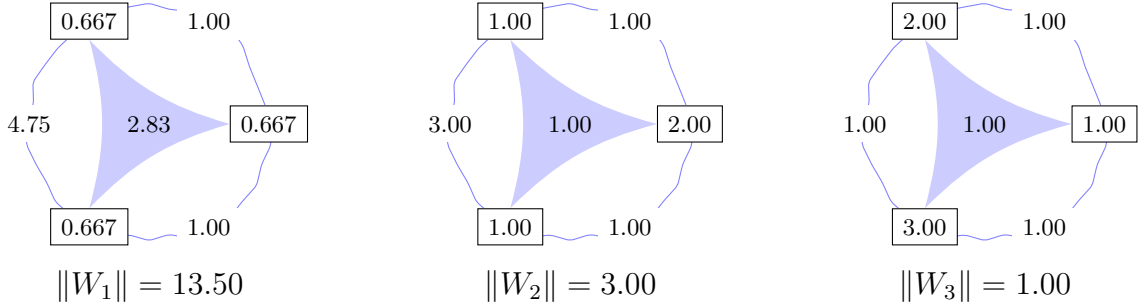
\begin{figure}[h]
  \centering
  \begin{tikzpicture}[scale=0.9]
    \pgfmathsetseed{42424242}
    \coordinate (o) at (0,0);
    \draw[blue!50,rounded corners=.5mm] (o) \irregularcircle{1.75cm}{1mm};
    \draw (0:1.75cm) node[draw,fill=white,font=\scriptsize] (s1) {$0.667$};
    \draw (120:1.75cm) node[draw,fill=white,font=\scriptsize] (s2) {$0.667$};
    \draw (240:1.75cm) node[draw,fill=white,font=\scriptsize] (s3) {$0.667$};

    \node[fill=white,font=\scriptsize] (w12) at (60:1.75cm) {$1.00$};
    \node[fill=white,font=\scriptsize] (w23) at (180:1.75cm) {$4.75$};
    \node[fill=white,font=\scriptsize] (w13) at (300:1.75cm) {$1.00$};

    \path[fill=blue!20] (s3.north) to[bend left=15] (s1.west) to[bend left=15] (s2.south) to[bend left=15] (s3.north);
    \node[font=\scriptsize] (w123) {$2.83$};

    \node[anchor=north] at (0,-2) {$\|W_1\|=13.50$};
  \end{tikzpicture}\hfill
  \begin{tikzpicture}[scale=0.9]
    \pgfmathsetseed{42424242}
    \coordinate (o) at (0,0);
    \draw[blue!50,rounded corners=.5mm] (o) \irregularcircle{1.75cm}{1mm};
    \draw (0:1.75cm) node[draw,fill=white,font=\scriptsize] (s1) {$2.00$};
    \draw (120:1.75cm) node[draw,fill=white,font=\scriptsize] (s2) {$1.00$};
    \draw (240:1.75cm) node[draw,fill=white,font=\scriptsize] (s3) {$1.00$};

    \node[fill=white,font=\scriptsize] (w12) at (60:1.75cm) {$1.00$};
    \node[fill=white,font=\scriptsize] (w23) at (180:1.75cm) {$3.00$};
    \node[fill=white,font=\scriptsize] (w13) at (300:1.75cm) {$1.00$};

    \path[fill=blue!20] (s3.north) to[bend left=15] (s1.west) to[bend left=15] (s2.south) to[bend left=15] (s3.north);
    \node[font=\scriptsize] (w123) {$1.00$};

    \node[anchor=north] at (0,-2) {$\|W_2\|=3.00$};
  \end{tikzpicture}\hfill
  \begin{tikzpicture}[scale=0.9]
    \pgfmathsetseed{42424242}
    \coordinate (o) at (0,0);
    \draw[blue!50,rounded corners=.5mm] (o) \irregularcircle{1.75cm}{1mm};
    \draw (0:1.75cm) node[draw,fill=white,font=\scriptsize] (s1) {$1.00$};
    \draw (120:1.75cm) node[draw,fill=white,font=\scriptsize] (s2) {$2.00$};
    \draw (240:1.75cm) node[draw,fill=white,font=\scriptsize] (s3) {$3.00$};

    \node[fill=white,font=\scriptsize] (w12) at (60:1.75cm) {$1.00$};
    \node[fill=white,font=\scriptsize] (w23) at (180:1.75cm) {$1.00$};
    \node[fill=white,font=\scriptsize] (w13) at (300:1.75cm) {$1.00$};

    \path[fill=blue!20] (s3.north) to[bend left=15] (s1.west) to[bend left=15] (s2.south) to[bend left=15] (s3.north);
    \node[font=\scriptsize] (w123) {$1.00$};

    \node[anchor=north] at (0,-2) {$\|W_3\|=1.00$};
  \end{tikzpicture}  \caption{Minimal molecules for binding polynomial $P_1$, $P_2$, $P_3$.}
  \label{fig:minimalMolecules}
\end{figure}

\section{The Algebraic Optimization Problem}\label{sectionalgopt}
In this section, we consider the computation of the minimal absolute interaction as an optimization problem with the absolute interaction as objective function and the set of all molecules sharing the same binding polynomial as feasible set:
\begin{equation}\label{orgopt}
  \begin{array}{lll}
    \displaystyle\mini_{\W} & \quad & \| \W \|\\[3mm]
    \st        & & \Phi(\W)=P
  \end{array}
\end{equation}
This problem seems simple and its concept can be understood easily. However, it
becomes quickly complicated if larger systems are considered. For small $n$ it
can be worked out by methods of algebraic optimization (e.g. using sums of squares
approaches \cite{Lasserre01, Parrilo03} or critical point methods \cite{SG14}).
For $n=4$ explicitly,
given the binding polynomial $P=a_4\Lambda^4+a_3\Lambda^3+a_2\Lambda^2+a_1\Lambda$ we have:
\begin{equation*}
  \begin{array}{lcl}
    \displaystyle\mini_{\W}
    && \displaystyle\prod_{|I|>1} \max\Big(w_I,w_I^{-1}\Big)\\[3mm]
    \st
    && a_1 = w_1+w_2+w_3+w_4,\\[1mm]
    && a_2 = w_1w_2w_{12}+w_1w_3w_{13}+w_1w_4w_{14}\\
    &&\qquad\qquad +w_2w_3w_{23}+w_2w_4w_{24}+w_3w_4w_{34},\\[1mm]
    && a_3 = w_1w_2w_3w_{12}w_{13}w_{23}w_{123}+w_1w_2w_4w_{12}w_{14}w_{24}w_{124}\\
    && \qquad\qquad +w_1w_3w_4w_{13}w_{14}w_{34}w_{134}+w_2w_3w_4w_{23}w_{24}w_{34}w_{234},\\[1mm]
    && a_4 = w_1w_2w_3w_4w_{12}w_{13}w_{14}w_{23}w_{24}w_{34}w_{123}w_{124}w_{134}w_{234}w_{1234}.
  \end{array}
\end{equation*}

One intuitive way to compute the minimal absolute interaction $\|P\|$ is to decompose the space of molecules $(\RR_{>0})^{2^n}$. We look at $2^{2^n-n-1}$ regions on which  for $|I|>1$ either $w_I\geq 1$ or $w_I\leq 1$ holds.
Explicitly, for each set of interactions $\mathcal I\subseteq \{I\subseteq [n]\mid |I|>1\}$, we define a region
\[ O_{\mathcal I}:= \big\{(w_I)_{I \subseteq [n]}\mid \forall |I|>1: w_I\geq 1 \;\forall I\notin\mathcal I \text{ and } w_I\leq 1 \; \forall I\in\mathcal I \big\} \]
so that
\[ (\RR_{>0})^{2^n} = \bigcup_{\mathcal I\subseteq \{I\subseteq [n]\mid |I|>1\}} O_{\mathcal I}. \]
On each region $O_{\mathcal I}$, the absolute interaction $\|W\|$ is a rational function in the interaction energies which we denote by $f_{\mathcal I}(W)$,
\begin{equation}
  \label{eq:fI}
  \|W\| = \prod_{I\in\mathcal I} w_I^{-1} \prod_{\substack{|I|>1\\I\not\in\mathcal I}} w_I=: f_{\mathcal I}(W) \quad \text{for } \W \in \mathcal{O}_{\mathcal{I}}.
\end{equation}
Finding the minimal absolute interaction inside $O_{\mathcal I}$ is a straight-forward problem in polynomial optimization:
\begin{equation}
  \label{eq:OptInO}
  \begin{array}{lll}
    \displaystyle\mini_{\W\in\mathcal O_{\mathcal I}} & \quad & f_{\mathcal I}(\W)\\[3mm]
    \st        & & \Phi(\W)=P.
  \end{array}
\end{equation}

However, this approach requires solving $2^{2^n-n-1}$ polynomial optimization problems, one for each region $\mathcal O_{\mathcal I}$, which is not easily feasible. \\

We can reduce the number of regions that require consideration by exploiting the
intrinsic symmetry of the problem.
To do this, we need to do a small journey in algebra and use the notion of group.
In mathematics, a (commutative) group is a set $S$ equipped with a binary
operation $\diamond$ that combines any two elements to form a third element of
that set and such that this operation is associative (and commutative), the set
contains an identity element ($\exists e \in S$ s.t. for all $x \in S$, $e
\diamond x = x$) and any element of $S$ is invertible (for all $x \in S$, there
exists $y\in S$ s.t. $x\diamond y = e$).
The basic example for groups is the set of integers $\mathbb{Z}$ equipped with the binary operation of addition.

In our setting, we want to exploit symmetry. We will consider the symmetric
group $S_n$ which is the set of all bijections from $\{1, \ldots, n\}$ to $\{1,
\ldots, n\}$ equipped with composition of maps. In the sequel, we use the following
standard notation: $(a b)$, for $a,b$ in $\{1, \ldots, n\}$, denotes the
bijection that permutes $a$ and $b$ in $\{1, \ldots, n\}$.

There exists a natural action of the symmetric group $S_n$ on the space of molecules $(\RR_{>0})^{2^n}$ which permutes the binding sites and the corresponding interaction energies,
\[ S_n\times (\RR_{>0})^{2^n} \longrightarrow (\RR_{>0})^{2^n},\quad (\sigma,(w_I)) \longmapsto \sigma\cdot (w_I) := (w_{\sigma(I)}). \]
This action permutes the regions $\mathcal O_{\mathcal I}$ and preserves binding polynomials as well as absolute interactions:

\begin{proposition}
  The binding polynomial and the minimal absolute interaction of a molecule are invariant under the group action of $S_n$ i.e., $\Phi(\sigma\cdot \W)=\Phi(\W)$ and $\|\sigma\cdot \W\| = \|\W\|$ for all $\W\in (\RR_{>0})^{2^n}$ and all $\sigma \in S_n$.
\end{proposition}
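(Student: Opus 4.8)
The plan is to isolate a single combinatorial fact about how $\sigma$ acts on the power set of $[n]$ and then push it through the two defining formulas. Write $\W':=\sigma\cdot\W$, so that $w'_I = w_{\sigma(I)}$ for every $I\subseteq[n]$ by definition of the action. The only thing I really need is that $I\mapsto\sigma(I)$ is a bijection of $2^{[n]}$ onto itself which (i) fixes $\emptyset$, (ii) preserves cardinality, and (iii) preserves inclusion, that is $I'\subseteq I$ if and only if $\sigma(I')\subseteq\sigma(I)$. All three are immediate consequences of $\sigma$ being a bijection of $[n]$; property (i) also confirms that the action is well defined on molecules, since it fixes the $\emptyset$-coordinate and $w'_\emptyset = w_\emptyset = 1$.

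First I would verify invariance of the binding polynomial by computing, for each $k$, the $k$-th coefficient of $\Phi(\W')$:
\[
 a_k(\W') \;=\; \sum_{|I|=k}\;\prod_{I'\subseteq I} w'_{I'} \;=\; \sum_{|I|=k}\;\prod_{I'\subseteq I} w_{\sigma(I')}.
\]
By (iii) the inner product over $I'\subseteq I$ is exactly the product of $w_{J'}$ over $J'\subseteq\sigma(I)$, and by (ii) the substitution $J=\sigma(I)$ re-indexes the outer sum over all $k$-subsets $J\subseteq[n]$. Hence $a_k(\W') = \sum_{|J|=k}\prod_{J'\subseteq J} w_{J'} = a_k(\W)$, so $\Phi(\sigma\cdot\W)=\Phi(\W)$.

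Next I would run the same re-indexing through the absolute interaction, now with a product in place of the outer sum and $\max(\,\cdot\,,\,\cdot^{-1})$ applied coordinatewise:
\begin{align*}
 \|\sigma\cdot\W\|
 &= \prod_{|I|>1}\max\bigl(w'_I,(w'_I)^{-1}\bigr)
  = \prod_{|I|>1}\max\bigl(w_{\sigma(I)},w_{\sigma(I)}^{-1}\bigr)\\
 &= \prod_{|J|>1}\max\bigl(w_J,w_J^{-1}\bigr)
  = \|\W\|,
\end{align*}
where the third equality substitutes $J=\sigma(I)$ and uses (ii) so that $I$ runs over all subsets of size $>1$ precisely when $J$ does. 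Finally, for the minimal absolute interaction of a fixed binding polynomial $P$: since $\sigma$ acts as a bijection of $(\RR_{>0})^{2^n}$ that leaves $\Phi$ invariant, it maps the feasible set $\{\W : \Phi(\W)=P\}$ onto itself, and since it also leaves the objective $\|\cdot\|$ invariant, the optimization problem \eqref{orgopt} is carried to itself; in particular $\|\Phi(\sigma\cdot\W)\| = \|\Phi(\W)\|$ and a molecule is minimal if and only if its $\sigma$-image is.

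I do not anticipate a genuine obstacle: the statement is essentially bookkeeping. The only point that warrants care is stating properties (i)--(iii) of $I\mapsto\sigma(I)$ precisely enough that the two one-line computations are visibly valid --- in particular not conflating ``subsets of $\sigma(I)$'' with ``$\sigma$ applied to subsets of $I$'', which is exactly where (iii) is used. Everything downstream, including the consequence that the action permutes the regions $\{\mathcal O_{\mathcal I}\}$ while preserving the subproblems \eqref{eq:OptInO}, then follows formally.
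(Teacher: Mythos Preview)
Your argument is correct; the paper states this proposition without proof, so there is no alternative approach to compare against. Your isolation of the three properties (i)--(iii) of the induced map on $2^{[n]}$ and the two re-indexing computations are exactly the right way to make the invariance explicit, and your closing remark on the feasible set of \eqref{orgopt} cleanly covers the ``minimal absolute interaction'' wording in the proposition (which, strictly, refers to the quantity $\|\cdot\|$ of Definition~\ref{def:norm} appearing in the displayed formula).
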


For example for any $1\leq i<j\leq n$, the permutation $(1i)(2j)\in S_n$ induces a bijection between the regions $O_{\{12\}}$ and $O_{\{ij\}}$ and hence
\begin{align*}
&\min\{ \|\W\| \mid \W \in O_{\{12\}} \text{ with }\Phi(\W)=P\} \\
&\qquad\qquad \qquad= \min\{ \|\W\| \mid \W\in O_{\{ij\}} \text{ with }\Phi(\W)=P\}.
\end{align*}

Therefore, it suffices to consider one representative per $S_n$-orbit of
regions. However, the number of required regions remains infeasibly large: We
can identify each region $\mathcal O_{\mathcal I}$ with a hypergraph with
vertices $[n]$ and edges $\mathcal I$ (a hypergraph is a generalization of a
graph in which an edge can join any number of vertices). The number of required
regions then coincides with the number of such hypergraphs up to isomorphism.
Included in the hypergraphs of the regions are all graphs with vertices $[n]$,
and the number of graphs on $n$ vertices up to isomorphism is well known to grow
faster than any exponential function \cite[\S 6.4]{GYZ14}.

Due to this problem, we preprocess the optimization problem. Note that we can rewrite Problem \eqref{orgopt} in the following minimax form, as on each region $\mathcal{O}_{\mathcal{I}}$ the function $f_{\mathcal I}(W)$, as defined in~\eqref{eq:fI}, dominates all other $f_{\mathcal I'}(W)$ with $\mathcal{I}' \neq \mathcal{I}$:
\begin{equation*}
  \begin{aligned}
    &\mini_{W} \qquad \max_{\mathcal I\subseteq \{I\subseteq [n]\mid |I|>1\}} f_{\mathcal I}(W)  \\
    &\st \qquad \quad \Phi(W)=P.
  \end{aligned}
\end{equation*}
This can then be lifted to a problem with linear objective function and non-linear constraints:
\begin{equation}\label{newoptprob}
  \begin{aligned}
    &\mini_{W}\qquad  r  \\
    &\st \quad   \begin{cases} \quad f_{\mathcal I}(W)\leq r \quad \text{for all } \mathcal I\subseteq \{I\subseteq [n]\mid |I|>1\},\\ \quad \Phi(W)=P. \end{cases}   \\
  \end{aligned}
\end{equation}
Given a fixed binding polynomial $P$, the polynomial system $\Phi(W)=P$ poses a serious computational problem. To sidestep this issue we now introduce new coordinates $s_I:=\prod_{I'\subseteq I}w_{I'}$. Here, $s_I$ represents the free energy difference of microstate $I$ to the fully unoccupied state. On the positive orthant $(\RR_{>0})^{2^n}$, this defines a bijection

\begin{flushright}
  \begin{tikzpicture}
    \node (topLeft) {$(\RR_{>0})^{2^n}$};
    \node[anchor=base west,xshift=35mm] (topRight) at (topLeft.base east) {$(\RR_{>0})^{2^n}$};

    \node[anchor=base,yshift=15mm] (midLeft) at (topLeft.base) {$(w_I)$};
    \node[anchor=base east,xshift=2mm] at (midLeft.base west) {$W=$};
    \node[anchor=base,yshift=15mm] (midRight) at (topRight.base) {$\Big(\displaystyle\prod_{I'\subseteq I} w_{I'}\Big)$};

    \node[anchor=base,yshift=-15mm] (bottomRight) at (topRight.base) {$(s_I)$};
    \node[anchor=base west,xshift=-2mm] at (bottomRight.base east) {$=S$};
    \node[anchor=base,yshift=-15mm] (bottomLeft) at (topLeft.base) {$\Big(\displaystyle\prod_{I'\subseteq I} s_{I'}^{(-1)^{|I\setminus I'|}}\Big)$};

    \draw[<->,shorten >=7mm,shorten <=7mm] (topLeft) -- (topRight);
    \draw[|->] ($(topLeft)+(1.5,1.5)$) -- node[below] {$\varphi$} ($(topRight)+(-1.5,1.5)$);
    \draw[<-|] ($(topLeft)+(1.5,-1.5)$) -- node[above] {$\phantom{{}^{-1}}\varphi^{-1}$} ($(topRight)+(-1.5,-1.5)$);

    \draw[draw opacity=0]
    (midLeft) -- node[sloped] {$\in$} (topLeft)
    (bottomLeft) -- node[sloped] {$\in$} (topLeft)
    (midRight) -- node[sloped] {$\in$} (topRight)
    (bottomRight) -- node[sloped] {$\in$} (topRight);

    \node[anchor=base west,text width=35mm,align=right,yshift=7mm] at (topRight.base east) {};
  \end{tikzpicture}\label{eq:s_I}
\end{flushright}

In the new coordinates, the formerly polynomial constraints $a_k = \sum_{|I|=k} \prod_{I'\subseteq I} w_{I'}$ are simplified to linear constraints $a_k=\sum_{|I|=k} s_I$ for $k=1,\ldots,n$. However, the functions $f_{\mathcal I}$ become more complicated. For example, for $n=4$ and $\mathcal I=\{ \{1,2,3\} \}$
\[ f_{\mathcal I} (W) = w_{123}^{-1}\cdot \prod_{\substack{|I|>1,\\ I\neq\{1,2,3\}}} w_I \quad \text{becomes} \quad
  f_{\mathcal I}\circ\varphi^{-1} (S) = \frac{s_{1234}s_{12}^2s_{13}^2s_{23}^2}{s_{123}^2s_1^3s_2^3s_3^3s_4}. \]
Since the $f_{\mathcal I}\circ\varphi^{-1} (S)$ remain monomials (with possibly negative exponents), this complication is of little consequence for the resulting problem:
\begin{equation}\label{eq:optInS}
  \begin{aligned}
    &\mini_{S}\qquad  r  \\
    &\st \quad   \begin{cases} \quad f_{\mathcal I}\circ\varphi^{-1}(S)\leq r \quad \text{for all } \mathcal I\subseteq \{I\subseteq [n]\mid |I|>1\},\\ \quad \sum_{|I|=k} s_I= a_k \quad \text{ for all } k=1,\ldots,n. \end{cases}   \\
  \end{aligned}
\end{equation}

\begin{example}\label{example3dim}
  Consider the polynomial $P_1$ of Example~\ref{example3site}. We use {\tt SCIP}~\cite{SCIP}, which is currently one of the fastest solvers for non-linear programming, to solve the resulting polynomial optimization problem \eqref{eq:optInS}. It uses a branch and bound method to solve the optimization problem with non-linear constraints.

  Figure~\ref{scipformat} shows the full input on the left and the partial output on the right. In the input, the first constraints \texttt{c1}, \texttt{c2}, \texttt{c3} enforce $\sum_{|I|=k} s_I= a_k$ for $i=1,2,3$. The remaining constraints {\tt c4} to {\tt c19} enforce $f_{\mathcal{I}}\circ\varphi^{-1}(S)\leq r$. For example, \texttt{c19} states that $s_1s_2s_3 / s_{123}\leq r$ which is equivalent to $f_{\emptyset}(W)=(w_{123}w_{12}w_{13}w_{23})^{-1} \leq r$ in the coordinates $w_I$. The output states an approximate optimal value of $y=13.5$ and clearly lists all values of $s_I$ of the minimal molecules, which gave rise to the values of $w_I$ in Example~\ref{example3site}.
\end{example}

\begin{figure}
  \begin{minipage}[t]{0.475\linewidth}
\begin{lstlisting}[showlines=true, basicstyle=\scriptsize\ttfamily]
Minimize
obj: r
Subject to
c1: s1+s2+s3=2
c2: s12+s13+s23=3
c3: s123=4
c4: s123-r*s1*s2*s3<=0
c5: s1*s2*s123-r*s3*s12^2<=0
c6: s1*s3*s123-r*s2*s13^2<=0
c7: s2*s3*s123-r*s1*s23^2<=0
c8: s12^2*s13^2*s23^2
    -r*s1^3*s2^3*s3^3*s123<=0
c9: s1^3*s2*s3*s123-r*s12^2*s13^2<=0
c10: s1*s2^3*s3*s123-r*s12^2*s23^2<=0
c11: s13^2*s23^2-r*s1*s2*s3^3*s123<=0
c12: s1*s2*s3^3*s123-r*s13^2*s23^2<=0
c13: s12^2*s23^2-r*s1*s2^3*s3*s123<=0
c14: s12^2*s13^2-r*s1^3*s2*s3*s123<=0
c15: s1^3*s2^3*s3^3*s123
     -r*s12^2*s13^2*s23^2<=0
c16: s1*s23^2-r*s2*s3*s123<=0
c17: s2*s13^2-r*s1*s3*s123<=0
c18: s3*s12^2-r*s1*s2*s123<=0
c19: s1*s2*s3-r*s123<=0

Bounds
0<s1<2
0<s2<2
0<s3<2
0<s12<3
0<s13<3
0<s23<3
4<=s123<=4
1<=r
End
\end{lstlisting}
\end{minipage}\hfill%
\begin{minipage}[t]{0.475\linewidth}
\begin{lstlisting}[showlines=true, basicstyle=\scriptsize\ttfamily]
SCIP version 6.0.0
Copyright (C) 2002-2018 ZIB Berlin

SCIP> read input.pip

SCIP> opt
SCIP Status  : problem is solved
Solving Time : 2.32
Solving Nodes: 587
Primal Bound : +1.349997004816e+01
Dual Bound   : +1.349997004816e+01
Gap          : 0.00 %

SCIP> display solution
objective value: 13.4999700481621
y                13.4999700481621
s1               0.666630230040994
s2               0.66617295525322
s3               0.667196814705786
s12              0.444091987464541
s13              0.444773633975962
s23              2.1111343785595
s123             4
[...]
\end{lstlisting}
\end{minipage}\vspace{-3mm}
\caption{Computing the minimal absolute interaction for a molecule with 3 sites using {\tt SCIP}.}\label{scipformat}
\end{figure}

Since the number of constraints are not too many when $n=3$, {\tt SCIP} works well. However, for $n=4$ there are $2^{11}$ constraints which makes it a hard problem to solve directly.

To simplify our problem computationally, we compute an upper bound $b_{+}$ and a lower bound $b_{-}$ for the minimal absolute interaction of a given binding polynomial. If both bounds are identical, the minimum is found.
 To compute the upper bound we minimize the minimal absolute interaction in a single region. For the experiments in Subsection~\ref{sectionexperimhemog} we consider the orthant $\mathcal{O}_{\emptyset}$. The result may not be the global minimum - since the minimum may lie in another orthant - but it gives an upper bound.
 For the lower bound, we consider the relaxed problem by looking at a problem with less constraints.  We pick a representative for each of the orbit of the group action $S^{n}$ and consider the corresponding region. We then only consider the constraints given by functions $f_{\mathcal{I}}$ dominant in those regions. Considering less constraints means that the feasible set of the optimization problem is bigger than originally allowed.
 Thus, this new problem gives the lower bound on the true optimization problem.
When the gap between upper and lower bound is sufficiently small, we assume to have found a molecule minimizing the problem.

\section{Experimental Results}\label{sectionexperimhemog}
In this section, we run computational experiments on data from \cite{imai1973analyses,ikeda1983thermodynamic} which is also summarized in \cite{connelly1986analysis}.

\subsection{Data}
The first data set consists of eight oxygen binding polynomials of human adult hemoglobin in its native form and of three chemically treated versions. The oxygen binding to each of the four hemoglobin variants has been observed under two different environmental conditions  \cite{imai1973analyses}.
The coefficients $a_1$, $a_2$ and $a_3$ of the corresponding polynomials $P_1$ to $P_8$ are listed in Figure~\ref{tableconnelly}. The coefficients $a_0$ and $a_4$ are standardized to $1$. Figure~\ref{fig:polynomials1To8} describes the experimental conditions under which binding polynomials $P_1$ to $P_8$ were obtained and illustrates the ranking of the degree of cooperativity as defined by the maximal slope of the Hill plot $n_{max}$ (as reported by \cite{imai1973analyses}). The slope to the Hill plot relates the variance of the probability distribution on the macrostates $\{0,1,2,3,4\}$ at the respective ligand activity to the variance of a binomial distribution with the same mean \cite{abeliovich2005empirical,martini2016cooperative}. A value larger than 1 indicates a variance higher than the maximal value an independent system can generate.\\

\begin{figure}[h]
  \centering
  \begin{tabular}{l|ccc}
    \multirow{2}{*}{Human Hb} & in the absence && in the presence \\
                                      & of 2mM DPG     && of 2mM DPG \\[1mm] \hline
    \multirow{2}{*}{untreated} & \multirow{2}{*}{$P_1$} &\multirow{2}{*}{$\prec$} & \multirow{2}{*}{$P_2$} \\
                                      & & & \\[-2mm]
    & $\curlyvee$ & & $\curlyvee$ \\[-2mm]
    treated with & \multirow{2}{*}{$P_3$} & \multirow{2}{*}{$\prec$} & \multirow{2}{*}{$P_4$} \\
    iodoacetamide & & & \\[-2mm]
    & $\curlyvee$ & & $\curlyvee$ \\[-2mm]
    treated with & \multirow{2}{*}{$P_5$} &\multirow{2}{*}{$\prec$} & \multirow{2}{*}{$P_6$} \\
    N-ethylmaeimide & & & \\[-2mm]
    & $\curlyvee$ & & $\curlyvee$ \\[-2mm]
    treated with & \multirow{2}{*}{$P_7$} &\multirow{2}{*}{$\prec$} & \multirow{2}{*}{$P_8$} \\
    carboxypeptidase A & & &
  \end{tabular}\vspace{-3mm}
  \caption{Conditions under which binding polynomials 1 to 8 were derived and the relation of their degree of cooperativity according to the maximal slope of the Hill plot $n_{max}$ ($\succ$ compares the hill slope $n_{max}$, DPG = 2,3-diphosphoglycerate).}
  \label{fig:polynomials1To8}
\end{figure}

The second data set \cite{ikeda1983thermodynamic} contains five binding polynomials ($P_9$ to $P_{13}$) of native hemoglobin HbII of {\it Scapharca inaequivalvis} measured at different temperatures (Figure~\ref{fig:polynomials9To25}) \cite[Table~3]{ikeda1983thermodynamic}.

\begin{figure}[h]
  \centering
  \begin{tabular}{l|ccccc}
    Clam HbII & at 10${}^\circ$ & at 15${}^\circ$ & at 20${}^\circ$ & at 25${}^\circ$ & at 30${}^\circ$\\[3mm] \hline
                    & $P_{9}$ & $P_{10}$ & $P_{11}$ & $P_{12}$ & $P_{13}$
  \end{tabular}
  \caption{Temperature in degree Celsius at which binding polynomials 9 to 13 were determined.}
  \label{fig:polynomials9To25}
\end{figure}

\subsection{Minimal absolute interactions}

Unlike the case $n=3$ in Example~\ref{example3site}, solving Problem~(\ref{eq:optInS}) for $n=4$ proved to be too hard for \texttt{SCIP}, which is why we list upper ($b_+$) and lower ($b_-$) bounds for the minimal absolute interaction $\|P_i\|$ instead. The upper bound is the minimal absolute interaction inside the region $\mathcal O_{\emptyset}$, i.e., it is obtained by solving Problem \eqref{eq:OptInO} for $\mathcal I=\emptyset$. In other words, the upper bound represents the minimal value on the region in which all interactions are equal to or larger than 1. The lower bound is computed through a relaxation of Problem~(\ref{eq:optInS}). Inspired by the action of the symmetric group $S_n$ on the regions $\mathcal O_{\mathcal I}$, we picked a set of $180$ representatives and discarded all constraints $f_{\mathcal I}(W)\leq r$ with $\mathcal I$ not in them, see Section~\ref{app:representatives}.

\begin{figure}[h]
	\centering
	\begin{tabular}{l|rrr|r|r|r}
		& $a_1$ & $a_2$ & $a_3$  & $b_{+}$ & $b_{-}$ & $n_{max}$\\ \hline
		$P_{1}$  & 0.835 & 0.379 & 0.541  & 527  & 527& 2.51\\
		$P_{2}$  & 0.789 & 0.154 & 0.0648 & 3322 & 1600&3.09 \\
		$P_{3}$  & 1.42  & 2.42  & 0.752  & 111  & 63  &1.63\\
		$P_{4}$  & 0.647 & 0.568 & 0.0986 & 2002& 1460 &2.71\\
		$P_{5}$  & 2.0   & 2.31  & 2.04   & 16  & 16  &1.44\\
		$P_{6}$  & 0.539 & 0.909 & 0.554  & 3033& 3030 &2.27\\
		$P_{7}$  & 3.47  & 4.74  & 2.76   & 2.27    & 1.68   &1.15\\
		$P_{8}$  & 3.26  & 5.36  & 2.23   & 7.64    & 2.19 &1.23\\[3mm]

		$P_{9}$ & 1.4   & 1.0   & 0.62   & 66   & 66  & 2.08\\
		$P_{10}$ & 1.4   & 0.96  & 0.60   & 66   & 66  &2.10\\
		$P_{11}$ & 1.2   & 0.93  & 0.70   & 123  & 123 &2.08\\
		$P_{12}$ & 1.4   & 0.95  & 0.62   & 66   & 66  &2.10\\
		$P_{13}$ & 1.1   & 0.98  & 0.59   & 175  & 175 &2.12
	\end{tabular}
	\caption{Binding polynomials, bounds on the minimal absolute interaction and $n_{max}$ as reported by \cite{ikeda1983thermodynamic,imai1973analyses}. Coefficients $a_0$ and $a_4$ are equal to $1$ for all polynomials.}\label{tableconnelly}
\end{figure}

Figure~\ref{tableconnelly} illustrates that for most binding polynomials our upper and lower bound allows for a sufficient approximation of the absolute minimal interaction.
However, there remains a nontrivial gap between both bounds for $P_2$, $P_3$, $P_4$.
Let us assume that the correct value for $P_2$ is close to the upper bound which we determined. Then, nearly all relations of the degree of cooperativity of polynomials $P_1$ to $P_8$ described by the maximal slope of the Hill plot $n_{max}$ (Figure~\ref{fig:polynomials1To8}) are also found with the minimal interaction criterion. The only exception is the ranking of $P_4$ and $P_6$.
Whereas $P_4$ describes a more cooperative system than $P_6$ according to $n_{max}$, we obtain a different picture when using the minimal interaction criterion which states that $P_6$ is almost as cooperative as $P_2$. The question arises why the two measures of cooperativity are so different.
An answer may be found considering the coefficients of $P_4$ and $P_6$. The coefficient $a_2$ of $P_6$ is larger than $a_1$ and $a_3$, which indicates that the macrostate of having $2$ sites occupied --at sufficiently high ligand activity-- is more probable than the states of having $1$ or $3$ ligands bound. To generate this type of distribution, we may need absolute interaction energy at each macrosotate level, for instance negative and positive interaction energy. Moreover, a family of distributions with a relatively high weight on macrostate $2$ has a reduced possibility to generate a high variance since an extraordinary high variance requires more weights on the macrostates $0$ and $4$.\\

The polynomials $P_9$ to $P_{13}$ were derived from the same protein at different temperatures \cite{ikeda1983thermodynamic}. We see that the reported $n_{max}$ is not behaving monotonously with increasing temperature, but that it varies between $2.08$ and $2.12$. Since one might expect a more monotonous development, we interpret this observation as a result of measuring imprecision of the coefficients. Comparing the minimal interaction to $n_{max}$, we see the same problem on a larger scale, which is a results of the minimal energy criterion being defined on an exponential scale of actual physical energies.

We consider the molecules realizing the upper bound in more detail.

\subsection{Molecules solving the optimization problem}

Figure~\ref{fig:minimalMolecules4Sites} presents the molecules realizing the upper bound of the minimal absolute interaction for all polynomials.
We see that the major part of the estimated interactions is very close to $1$. This is not surprising since the considered optimization problem minimizes the absolute values of the physical energies required to generate the binding curve. $L_1$ optimizations --for instance such as the least absolute shrinkage and selection operator (Lasso) regression \cite{tibshirani1996regression}-- are known to provide sparse solutions, meaning that many of the physical energies of the solution will be zero. Thus, this observation fits well to the context. Considering the structure of the solutions in more detail, we see that most solutions include a non-trivial weight for a second degree interaction ($w_{34}$).
The only polynomial whose minimizing molecule in $\mathcal{O}_{\emptyset}$ does not possess an interaction of pair-wise interaction is $P_2$
which has the highest $n_{max}$. This observation seems plausible since the weights should be more towards the extreme values, that is to the macrostates $0$ and $4$ to obtain a high variance of the macroscopic distribution. Compared to $P_1$, $P_2$ puts more probability on the macrostates $3$ and $4$. With this view, we can also explain the apparent contradiction that we find when using $n_{max}$ or the minimal absolute interaction to compare polynomials $P_4$ and $P_6$. As described above, $P_6$ has a coefficient $a_2$ which is larger than $a_1$ and $a_3$ and almost of same size as $a_4$, which is $1$.
This means that the distribution on the macrostates is stabilized towards the macrostate $2$, which hinders
very large variances and thus leads to a smaller $n_{max}$ than that of $P_4$.
However, this circumstance leads to a high interaction weight for a microstate of macrostate $2$. While the main weight for $P_4$ is on macrostate $4$ -which helps to achieve a higher value of $n_{max}$, $P_6$ requires more interaction energy to model the behavior at macrostate $2$. \\

\begin{figure}[h]
	\centering
  \setlength{\tabcolsep}{2.5pt}
	\begin{tabular}{l|c c c c c c c c c c c}
		& $w_1$ & $w_2$ & $w_3$  & $w_{4}$ & $w_{12}\cdots w_{24}$ & $w_{34}$ & $w_{123}$ & $w_{124}$ & $w_{134}$ & $w_{234}$ & $w_{1234}$ \\ \hline

		$P_{1}$  & 0.209 & 0.209 & 0.209 & 0.209 & 1.00 & \color{blue} 3.70 & 1.00 & 1.00 & \color{blue}14.5 & 1.00 & \color{blue} 9.79 \\

		$P_{2}$ & 0.0822 & 0.0822 & 0.0822 &  0.543 & 1.00 & 1.00 & \color{blue} 6.60 & 1.00 & \color{blue} 14.7 & 1.00 & \color{blue} 34.3 \\

		$P_{3}$ & 0.179 & 0.179 & 0.531 & 0.531 & 1.00 & \color{blue} 7.12 & 1.00 & 1.00 & 1.00 & 1.00 &\color{blue} 15.6 \\

		$P_{4}$ & 0.101 & 0.101 & 0.223 & 0.223 & 1.00 & \color{blue} 9.42 &  1.00 & 1.00 & 1.00 & 1.00 & \color{blue}212 \\

		$P_{5}$ & 0.500 & 0.500 & 0.500 & 0.500 & 1.00 & \color{blue} 4.24 & 1.00 & 1.00 & \color{blue} 1.69 & \color{blue} 1.69 & \color{blue} 1.32 \\

		$P_{6}$ & 0.135 & 0.135 & 0.135 & 0.135 & 1.00 & \color{blue} 45.1 &  1.00 & 1.00 & \color{blue} 2.49 & \color{blue} 2.49 &\color{blue} 10.9 \\

		$P_{7}$ & 0.569 & 0.569 & 1.17 & 1.17 & 1.00 & \color{blue} 1.30 &1.00 & 1.00 & 1.00 & 1.00 &\color{blue} 1.75 \\

		$P_{8}$ & 0.265 & 0.265 & 1.36 & 1.36 & 1.00 & \color{blue} 2.06 & 1.00 & 1.00 & 1.00 & 1.00 & \color{blue} 3.70 \\[3mm]

		$P_{9}$ & 0.350 & 0.350 & 0.350 & 0.350 & 1.00 & \color{blue} 3.16 & 1.00 & 1.00 &\color{blue} 1.97 & \color{blue} 1.97 & \color{blue} 5.43 \\

		$P_{10}$ & 0.350 & 0.350 & 0.350 & 0.350 & 1.00 & \color{blue} 2.84 &1.00 & 1.00 &\color{blue} 2.11 &\color{blue} 2.11 &\color{blue} 5.26\\

		$P_{11}$ & 0.300 & 0.300 & 0.300 & 0.300 & 1.00 & \color{blue} 5.33 & 1.00 & 1.00 & \color{blue} 2.24 &\color{blue} 2.24 & \color{blue} 4.60 \\

		$P_{12}$ & 0.350 & 0.350 & 0.350 & 0.300 & 1.00 & \color{blue} 2.76 & 1.00 & 1.00 &\color{blue} 2.26 &\color{blue} 2.26 & \color{blue} 4.73 \\

		$P_{13}$ & 0.275 & 0.275 & 0.275 & 0.275 & 1.00 &\color{blue} 7.96 & 1.00 & 1.00 & \color{blue} 1.66 & \color{blue} 1.66 &\color{blue} 8.01
	\end{tabular}
	\caption{Molecules of polynomials $P_1$ to $P_{13}$ realizing the upper bound $b_+$. The $w_I$s are rounded to four digits. The incidences of a $w_I$ being smaller than $1$ are numerical impressions when resolving from the $s_I$ coordinates. All interactions larger than 1 are highlighted in blue.} \label{fig:minimalMolecules4Sites}
\end{figure}

\section{Discussion and Outlook}
\subsection{Cooperativity and minimal absolute interaction}
A commonly used macroscopic conceptualization of cooperativity is that ``It's all or nothing'' \cite{hunter2009cooperativity}, meaning for hemoglobin that the probability weights of the extreme macrostates $0$ and $4$ are relatively big. This definition also aligns well with using the maximal Hill slope as criterion of cooperativity, since it detects a variance which is higher than that of any independent system.
However, observations made on the macroscopic level have often been interpreted with a view on microscopic properties, for instance that ``binding of a ligand molecule increases the receptor's apparent affinity, and hence increases the chance of another ligand molecule binding'' \cite{stefan2013cooperative}. Connecting macroscopic and microscopic properties is a non-trivial problem if no additional restriction on the underlying molecule are made. Microscopic properties can be easily inferred from the macroscopic behavior if additional assumptions are made, for instance the molecule being composed of physically indistinguishable sites. In particular, given a binding polynomial, we can calculate all interaction energies easily when we assume that $w_{ij} = w_{kl}$, $w_{ijk} = w_{lmn}$, etc.
However, without additional knowledge or restrictions, it is difficult to relate macroscopic and microscopic properties of larger, asymmetric systems \cite{martini2016cooperative}.
The minimal interaction energy required to generate the observed binding curve is a good candidate to connect macroscopic and microscopic characteristics and additional side conditions based on knowledge of the molecule structure could also be incorporated \cite{martini2017measure}. \\

As illustrated by the examples of $P_4$ and $P_6$, the minimal interaction energy criterion captures a conceptually different type of cooperativity than $n_{max}$. The polynomial $P_6$ requires a relatively high interaction energy due to giving the highest weights to macrostates $0$, $2$ and $4$, but stabilizing towards the intermediate value $2$ prevents an extreme variance. Contrary, $P_4$ has monotonously decreasing coefficients $a_1$ to $a_3$. Here, modeling the distribution requires less interaction energy to capture the relation of the probability on macrostates $1$, $2$ and $3$, and a higher value of the variance of the distribution can be realized.

\subsection{The mathematical problem}
Ligand binding raises many intriguing questions in applied algebra and geometry, and stands to benefit equally from the provided insight into the intrinsic geometry of the problem.
While previous works have explored the application of polynomial system solving techniques to a situation with different types of ligands \cite{Ren2019,martini2013interaction}, this work explores the use of polynomial optimization.
Using the minimal interaction energy as a criterion for cooperativity without assuming the target molecule to be composed of physically identical units, we deal with an algebraic optimization problem.
Even though there exists dedicated software to solve this kind of problem, we had to apply a transformation of the coordiantes (from $w_I$ to $s_I$), a transformation of the optimization problem (Eq.(\ref{newoptprob})) and approximations by upper and lower bounds to approach the considered examples. This effort illustrates that the setup of looking for the minimal absolute interaction by this method becomes rapidly complicated with increasing number of binding sites.
Whereas the case of three binding sites can be solved easily with \texttt{SCIP}, tetrameric hemoglobin with four binding sites already posed computational problems.
A way to facilitate the problem may be to exploit its intrinsic structure. For instance, there are results for strongly symmetric systems which might be generalized and tailored to the system with which we are dealing  \cite{RienerSafey18}.
In addition, other mathematical fields such as tropical geometry \cite{maclagan2015introduction} may provide interesting tools to approach this problem.

\subsection{Other mathematical perspectives}
\subsubsection{Tensors}
Tensors \cite{Landsberg12} are multidimensional arrays of real numbers which generalize matrices. We can regard molecules as (positive) tensors in $\RR^{2\times \dots\times 2} = (\RR^2)^{\otimes n}$, whose entries are indexed by $\{0,1\}^n$ which is in bijection with the subsets of $[n]$.

From this point of view, many concepts in tensor theory have direct analogues in ligand binding, such as symmetric tensors and molecules with indistinguishable sites. Notably, the central concepts of tensor rank and tensor rank decomposition have a special meaning in ligand binding:

\begin{proposition}
  Let $S=(s_I)_{I\subseteq[n]}\in\RR^{2^n}$ be a molecule with $n$ sites and whose corresponding tensor $(s_I)_{I\in\{0,1\}^n}$ has (positive) rank $k$. Then $S$ can be written as a weighted average of $k$ molecules with $n$ independent sites. In particular, a corresponding mixture of the $k$ molecules will have the same binding curve as $S$.
  Moreover, this is not possible with $k-1$ molecules with $n$ independent sites.
\end{proposition}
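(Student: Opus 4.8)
The plan is to carry everything over to the coordinates $s_I=\prod_{I'\subseteq I}w_{I'}$, in which the binding‑polynomial constraints become the linear equations $a_k=\sum_{|I|=k}s_I$, and to identify molecules with trivial interaction with rank‑one tensors; the statement then follows by unwinding the definition of (positive) tensor rank, and is essentially the ligand‑binding incarnation of the equivalence between nonnegative rank and mixtures of independent models. The first step is to prove that a molecule $W$ has $w_I=1$ for all $|I|>1$ if and only if its $S$‑coordinates factor as $s_I=\prod_{i\in I}w_i$. One implication is immediate; for the converse I would apply Möbius inversion on the subset lattice, $w_I=\prod_{I'\subseteq I}s_{I'}^{(-1)^{|I\setminus I'|}}$, together with the identity $\sum_{I'\subseteq I}(-1)^{|I\setminus I'|}\sum_{i\in I'}\log w_i=0$, valid whenever $|I|\ge 2$. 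Identifying a subset $I$ with its indicator in $\{0,1\}^n$, the factorization $s_I=\prod_{i\in I}w_i$ says exactly that the tensor $(s_I)$ equals the rank‑one tensor $\bigotimes_{i=1}^n(1,w_i)$; conversely, any rank‑one tensor with strictly positive entries is of the form $\bigotimes_{i=1}^n(a_i,b_i)$ with $a_i,b_i>0$ and all‑zeros entry $\prod_i a_i$, hence a positive multiple of the $S$‑coordinate tensor $\bigotimes_i(1,b_i/a_i)$ of the independent molecule with $w_i=b_i/a_i$. So the $S$‑coordinate tensors of molecules with $n$ independent sites are precisely the positive rank‑one tensors whose all‑zeros entry equals $1$.

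Given this dictionary, the first claim is short. Writing $S=\sum_{j=1}^k T_j$ with each $T_j$ rank‑one with positive entries and $k$ smallest possible (this is what ``(positive) rank $k$'' means), set $\mu_j:=(T_j)_\emptyset>0$ and $W_j:=\mu_j^{-1}T_j$. By the first step each $W_j$ is a molecule with $n$ independent sites, and $S=\sum_j\mu_j W_j$; moreover $\sum_j\mu_j=\sum_j(T_j)_\emptyset=s_\emptyset=1$ because $S$ is a molecule, so this is genuinely a weighted average. That the mixture shares the binding curve of $S$ is because the coefficients $a_k=\sum_{|I|=k}s_I$ depend linearly on $S$, so $\Phi\big(\sum_j\mu_j W_j\big)=\sum_j\mu_j\Phi(W_j)$; the average occupancy $\Lambda\,\Phi'(\Lambda)/\Phi(\Lambda)$ of a mixture of the $W_j$ in mole fractions $\mu_j$ then equals $\Lambda\,\Phi(S)'(\Lambda)/\Phi(S)(\Lambda)$, the binding curve of $S$.

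For the minimality statement I would argue by contradiction: if $S=\sum_{j=1}^{k-1}\mu_j W_j$ were a weighted average of $k-1$ molecules with $n$ independent sites, then by the first step each $\mu_j W_j$ is a rank‑one tensor with positive entries, exhibiting $S$ as a sum of $k-1$ such tensors and contradicting that its positive rank is $k$.

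The part I expect to need the most care is the first step, and in particular the positivity bookkeeping: one has to ensure that the rank‑one summands realizing the positive rank can be taken with strictly positive entries, so that after dividing by the all‑zeros coordinate they are honest molecules and the resulting weights $\mu_j$ are positive with sum $1$. This is precisely why the relevant invariant here is the positive rank rather than the ordinary real tensor rank; keeping that distinction straight is the crux of a clean proof.
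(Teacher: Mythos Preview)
Your argument is correct and follows essentially the same route as the paper: identify molecules with independent sites as (normalized) positive rank-one tensors, take a rank decomposition $S=\sum_j T_j$, and divide each summand by its entry at $\emptyset$ to obtain weights $\mu_j$ and independent molecules $W_j$. Your write-up is in fact more complete than the paper's, which carries out the normalization but does not explicitly check that the weights sum to $1$, does not spell out the minimality direction, and does not address the binding-curve claim or the positivity bookkeeping you rightly flag as the delicate point.
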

\begin{proof}
  Rank-one tensors are of the form $v_1\otimes \dots \otimes v_n$ for some $v_i=(s_i,t_i)\in\RR^2$ with entries
  \begin{equation*}
    \label{eq:tensor}
    (1,0,\dots)\!: s_1\cdot\prod_{i\neq 1} t_i, \quad (0,1,0,\dots)\!: s_2 \cdot\prod_{i\neq 2} t_i, \quad (1,1,0,\dots)\!: s_1s_2 \cdot\prod_{i\neq 1,2} t_i, \quad \ldots
  \end{equation*}
  They can be regarded as molecules with particularly simple interaction, as molecules with independent sites are rank-one tensors with $t_i=1$. A tensor $S$ of rank $k$ can be expressed as
  \[ S = \sum_{j=1}^k v_1^{(j)} \otimes \dots \otimes v_n^{(j)} \]
  for $k$ rank-one tensors $v_1^{(j)} \otimes \dots \otimes v_n^{(j)}$ and cannot be expressed as a sum of $k-1$ rank-one tensors. Supposing $v_i^{(j)}=(s_i^{(j)},t_i^{(j)})\in\RR^2$, we can reformulate this into
  \[ S = \sum_{j=1}^k \Big(\prod_{i=1}^k t_i^{(j)} \Big)
    \cdot \left(
      \begin{smallmatrix}
        s_1^{(j)}/t_1^{(j)} \\ 1
      \end{smallmatrix}
    \right) \otimes \dots \otimes \left(
      \begin{smallmatrix}
        s_n^{(j)}/t_n^{(j)} \\ 1
      \end{smallmatrix}
    \right), \]
  thus $S$ is a weighted average $k$ molecules with $n$ independent sites.
\end{proof}



\subsubsection{Tropical Geometry}
Tropical geometry \cite{maclagan2015introduction} revolves around functions on the max-plus semiring. One example of such as function is the absolute interaction in logarithmic coordinates:
\begin{center}
  \begin{tikzpicture}
    \node (topLeft) {$(\RR_{>0})^{2^n}$};
    \node[anchor=base west,xshift=45mm] (topRight) at (topLeft.base east) {$\RR^{2^n}$};
    \node[anchor=base,yshift=-15mm] (bottomRight) at (topRight.base) {$(\log(w_I))$};
    \node[anchor=base west,xshift=-2mm] at (bottomRight.base east) {$=:(x_I)$};
    \node[anchor=base,yshift=-15mm] (bottomLeft) at (topLeft.base) {$(w_I)$};
    \node[anchor=base,yshift=-15mm] at (bottomLeft.base) {$\displaystyle\prod_{I\subseteq[n]}\max(w_I,w_I^{-1})$};
    \node[anchor=base,yshift=-15mm] (bottombottomRight) at (bottomRight.base) {$\displaystyle\sum_{I\subseteq[n]}\max(x_I,-x_I)$};
    \node[anchor=base west,xshift=-2mm] at (bottombottomRight.base east) {$=:\|(x_I)\|$};

    \draw[<->] ($(topLeft.base east)+(1.25,0.1)$) -- ($(topRight.base west)+(-1.25,0.1)$);
    \draw[|->] ($(topLeft.base east)+(1.25,-1.4)$) -- ($(topRight.base west)+(-1.25,-1.4)$);
    \draw[|->] ($(topLeft.base east)+(1.25,-2.9)$) -- ($(topRight.base west)+(-1.25,-2.9)$);

    \draw[draw opacity=0]
    (bottomLeft) -- node[sloped] {$\in$} (topLeft)
    (bottomRight) -- node[sloped] {$\in$} (topRight);
  \end{tikzpicture}
\end{center}

In tropical geometry, the function $\|\cdot\|$ is referred to as a tropical rational function. It is a piecewise-linear function, linear on the image of each $\mathcal O_{\mathcal I}$, and the boundary separating its regions of linearity is called its tropical hypersurface $\text{Trop}(\|\cdot\|)$. The gradient of $\|\cdot\|$ is constant on each region of linearity, which is why understanding the intersection of a set $L\subseteq\RR^{2^n}$ with $\text{Trop}(\|\cdot\|)$ is paramount to computing the minimum of $\|\cdot\|$ on $L$.

In our setting, $L$ is the image of the feasible set given by the conditions $\Phi(W)=P$ which is an exponential variety of codimension $n$. Since the number of regions of linearity is finite, there exist only a finite number of intersection patterns of $L$ and $\text{Trop}(\|\cdot\|)$. Classifying these intersection patterns depending on the coefficients of $P$ will contribute greatly towards computing the minimal interactions and minimizing molecules. 

\subsubsection{Dependency measures}
From a stochastic or statistical point of view, the problem of cooperativity is a problem of quantifying (minimal) dependency, which is a classical research topic in these fields \cite{renyi1959measures,schweizer1981nonparametric,koyak1987measuring}. In particular copulas \cite{schweizer1991thirty,durante2010copula,nelsen2007introduction} have become a common tool to describe dependencies between random variables. Copulas cannot be applied directly to the problem since they are defined on continuous, not discrete, random variables. Moreover, they aim at modeling the dependency of given random variables which in our scenario corresponds to the dependency of the sites of a given molecule. The minimization problem of finding the minimal dependency required to generate a given distribution of a sum of dependent variables is usually not included. Nevertheless, the problem of cooperativity should also be treated from a more probabilistic point of view and the available concepts should be considered in more detail in the context of ligand binding.

\subsection{Conclusion}
We illustrated how searching for the minimal interaction energy required to generate a binding curve leads to a problem of algebraic optimization. Even though, there are computational tools available and the structure of the problem seems relatively simple on first sight, it becomes complicated when more than $3$ binding sites are considered. In particular the fact that we had to calculate a lower bound by a relaxed problem and not having been able to close the gap between upper and lower bound completely, indicates that our approach can currently not be enveloped in a simple ready-to-use tool to quantify cooperativity. We showed that the concept of minimally present interaction energy is different from using the maximal Hill slope as a measure of cooperativity. The first also detects required changes in affinity from one macrostate to the other, whereas the latter only focuses on the macroscopic variance of the distribution. Future work may use other mathematical concepts such as tensor theory, tropical geometry and probabilistic measures of dependencies to approach questions related to cooperativity.


\section{List of Constraints}\label{app:representatives}

Below is a compact representation of the list of $180$ constraints we considered to compute the lower bounds $b_{-}$ of minimal absolute interaction in Figure~\ref{tableconnelly}.
There, $\{12,34,123\}$ represents the constraint $f_{\mathcal I}\leq r$ for $\mathcal I=\{12,34,123\}$, which is the absolute interaction $\|W\|$ for $W\in\mathcal O_{\mathcal I}$ where $w_{12},w_{34},w_{123}\leq 1$ and $w_{I} \geq 1$ otherwise. 

\vspace{0.7cm}

\noindent
\begin{scriptsize}
$\emptyset$, $\!\{12\}$, $\!\{123\}$, $\!\{1234\}$, $\!\{12$, $\!\!13\}$, $\!\{12$, $\!\!34\}$, $\!\{12$, $\!\!123\}$, $\!\{12$, $\!\!134\}$, $\!\{123$, $\!\!124\}$, $\!\{12$, $\!\!1234\}$, $\!\{123$, $\!\!1234\}$, $\!\{12$, $\!\!13$, $\!\!14\}$, $\!\{12$, $\!\!13$, $\!\!23\}$, $\!\{12$, $\!\!13$, $\!\!24\}$, $\!\{12$, $\!\!13$, $\!\!123\}$, $\!\{12$, $\!\!13$, $\!\!124\}$, $\!\!\!\{12$, $\!\!34$, $\!\!123\}$, $\!\{12$, $\!\!13$, $\!\!234\}$, $\!\{12$, $\!\!123$, $\!\!124\}$, $\!\{12$, $\!\!123$, $\!\!134\}$, $\!\{12$, $\!\!134$, $\!\!234\}$, $\!\{123$, $\!\!124$, $\!\!134\}$, $\!\{12$, $\!\!13$, $\!\!1234\}$, $\!\{12$, $\!\!34$, $\!\!1234\}$, $\!\{12$, $\!\!123$, $\!\!1234\}$, $\!\{12$, $\!\!134$, $\!\!1234\}$, $\!\{123$, $\!\!124$, $\!\!1234\}$, $\!\{12$, $\!\!13$, $\!\!14$, $\!\!23\}$, $\!\{12$, $\!\!13$, $\!\!24$, $\!\!34\}$, $\!\{12$, $\!\!13$, $\!\!14$, $\!\!123\}$, $\!\{12$, $\!\!13$, $\!\!23$, $\!\!123\}$, $\!\{12$, $\!\!13$, $\!\!24$, $\!\!123\}$, $\!\{12$, $\!\!13$, $\!\!23$, $\!\!124\}$, $\!\{12$, $\!\!13$, $\!\!24$, $\!\!134\}$, $\!\{12$, $\!\!13$, $\!\!14$, $\!\!234\}$, $\!\{12$, $\!\!13$, $\!\!123$, $\!\!124\}$, $\!\{12$, $\!\!13$, $\!\!124$, $\!\!134\}$, $\!\{12$, $\!\!13$, $\!\!123$, $\!\!234\}$, $\!\{12$, $\!\!34$, $\!\!123$, $\!\!134\}$, $\!\{12$, $\!\!34$, $\!\!123$, $\!\!124\}$, $\!\{12$, $\!\!13$, $\!\!124$, $\!\!234\}$, $\!\{12$, $\!\!123$, $\!\!124$, $\!\!134\}$, $\!\{12$, $\!\!123$, $\!\!134$, $\!\!234\}$, $\!\{123$, $\!\!124$, $\!\!134$, $\!\!234\}$, $\!\{12$, $\!\!13$, $\!\!14$, $\!\!1234\}$, $\!\{12$, $\!\!13$, $\!\!23$, $\!\!1234\}$, $\!\{12$, $\!\!13$, $\!\!24$, $\!\!1234\}$, $\!\{12$, $\!\!13$, $\!\!123$, $\!\!1234\}$, $\!\{12$, $\!\!13$, $\!\!124$, $\!\!1234\}$, $\!\{12$, $\!\!34$, $\!\!123$, $\!\!1234\}$, $\!\{12$, $\!\!13$, $\!\!234$, $\!\!1234\}$, $\!\{12$, $\!\!123$, $\!\!124$, $\!\!1234\}$, $\!\!\!\{12$, $\!\!123$, $\!\!134$, $\!\!1234\}$, $\!\{12$, $\!\!134$, $\!\!234$, $\!\!1234\}$, $\!\{123$, $\!\!124$, $\!\!134$, $\!\!1234\}$, $\!\{12$, $\!\!13$, $\!\!14$, $\!\!23$, $\!\!24\}$, $\!\{12$, $\!\!13$, $\!\!14$, $\!\!23$, $\!\!123\}$, $\!\{12$, $\!\!13$, $\!\!14$, $\!\!23$, $\!\!124\}$, $\!\{12$, $\!\!13$, $\!\!24$, $\!\!34$, $\!\!123\}$, $\!\{12$, $\!\!13$, $\!\!14$, $\!\!23$, $\!\!234\}$, $\!\{12$, $\!\!13$, $\!\!14$, $\!\!123$, $\!\!124\}$, $\!\{12$, $\!\!13$, $\!\!23$, $\!\!123$, $\!\!124\}$, $\!\{12$, $\!\!13$, $\!\!24$, $\!\!123$, $\!\!124\}$, $\!\{12$, $\!\!13$, $\!\!24$, $\!\!123$, $\!\!234\}$, $\!\{12$, $\!\!13$, $\!\!24$, $\!\!123$, $\!\!134\}$, $\!\{12$, $\!\!13$, $\!\!23$, $\!\!124$, $\!\!134\}$, $\!\{12$, $\!\!13$, $\!\!14$, $\!\!123$, $\!\!234\}$, $\!\{12$, $\!\!13$, $\!\!24$, $\!\!134$, $\!\!234\}$, $\!\{12$, $\!\!13$, $\!\!123$, $\!\!124$, $\!\!134\}$, $\!\{12$, $\!\!13$, $\!\!123$, $\!\!124$, $\!\!234\}$, $\!\{12$, $\!\!34$, $\!\!123$, $\!\!124$, $\!\!134\}$, $\!\{12$, $\!\!13$, $\!\!124$, $\!\!134$, $\!\!234\}$, $\!\{12$, $\!\!123$, $\!\!124$, $\!\!134$, $\!\!234\}$, $\!\{12$, $\!\!13$, $\!\!14$, $\!\!23$, $\!\!1234\}$, $\!\{12$, $\!\!13$, $\!\!24$, $\!\!34$, $\!\!1234\}$, $\!\{12$, $\!\!13$, $\!\!14$, $\!\!123$, $\!\!1234\}$, $\!\{12$, $\!\!13$, $\!\!23$, $\!\!123$, $\!\!1234\}$, $\!\{12$, $\!\!13$, $\!\!24$, $\!\!123$, $\!\!1234\}$, $\!\{12$, $\!\!13$, $\!\!23$, $\!\!124$, $\!\!1234\}$, $\!\{12$, $\!\!13$, $\!\!24$, $\!\!134$, $\!\!1234\}$, $\!\{12$, $\!\!13$, $\!\!14$, $\!\!234$, $\!\!1234\}$, $\!\{12$, $\!\!13$, $\!\!123$, $\!\!124$, $\!\!1234\}$, $\!\{12$, $\!\!13$, $\!\!124$, $\!\!134$, $\!\!1234\}$, $\!\{12$, $\!\!13$, $\!\!123$, $\!\!234$, $\!\!1234\}$, $\!\{12$, $\!\!34$, $\!\!123$, $\!\!134$, $\!\!1234\}$, $\!\{12$, $\!\!34$, $\!\!123$, $\!\!124$, $\!\!1234\}$, $\!\{12$, $\!\!13$, $\!\!124$, $\!\!234$, $\!\!1234\}$, $\!\{12$, $\!\!123$, $\!\!124$, $\!\!134$, $\!\!1234\}$, $\!\{12$, $\!\!123$, $\!\!134$, $\!\!234$, $\!\!1234\}$, $\!\{123$, $\!\!124$, $\!\!134$, $\!\!234$, $\!\!1234\}$, $\!\{12$, $\!\!13$, $\!\!14$, $\!\!23$, $\!\!24$, $\!\!34\}$, $\!\{12$, $\!\!13$, $\!\!14$, $\!\!23$, $\!\!24$, $\!\!123\}$, $\!\{12$, $\!\!13$, $\!\!14$, $\!\!23$, $\!\!24$, $\!\!134\}$, $\!\{12$, $\!\!13$, $\!\!14$, $\!\!23$, $\!\!123$, $\!\!124\}$, $\!\{12$, $\!\!13$, $\!\!24$, $\!\!34$, $\!\!123$, $\!\!234\}$, $\!\{12$, $\!\!13$, $\!\!14$, $\!\!23$, $\!\!124$, $\!\!134\}$, $\!\{12$, $\!\!13$, $\!\!14$, $\!\!23$, $\!\!123$, $\!\!234\}$, $\!\{12$, $\!\!13$, $\!\!24$, $\!\!34$, $\!\!123$, $\!\!124\}$, $\!\{12$, $\!\!13$, $\!\!14$, $\!\!23$, $\!\!124$, $\!\!234\}$, $\!\{12$, $\!\!13$, $\!\!14$, $\!\!123$, $\!\!124$, $\!\!134\}$, $\!\{12$, $\!\!13$, $\!\!23$, $\!\!123$, $\!\!124$, $\!\!134\}$, $\!\{12$, $\!\!13$, $\!\!24$, $\!\!123$, $\!\!124$, $\!\!134\}$, $\!\{12$, $\!\!13$, $\!\!14$, $\!\!123$, $\!\!124$, $\!\!234\}$, $\!\{12$, $\!\!13$, $\!\!24$, $\!\!123$, $\!\!134$, $\!\!234\}$, $\!\{12$, $\!\!13$, $\!\!23$, $\!\!124$, $\!\!134$, $\!\!234\}$, $\!\{12$, $\!\!13$, $\!\!123$, $\!\!124$, $\!\!134$, $\!\!234\}$, $\!\{12$, $\!\!34$, $\!\!123$, $\!\!124$, $\!\!134$, $\!\!234\}$, $\!\{12$, $\!\!13$, $\!\!14$, $\!\!23$, $\!\!24$, $\!\!1234\}$, $\!\{12$, $\!\!13$, $\!\!14$, $\!\!23$, $\!\!123$, $\!\!1234\}$, $\!\{12$, $\!\!13$, $\!\!14$, $\!\!23$, $\!\!124$, $\!\!1234\}$, $\!\{12$, $\!\!13$, $\!\!24$, $\!\!34$, $\!\!123$, $\!\!1234\}$, $\!\{12$, $\!\!13$, $\!\!14$, $\!\!23$, $\!\!234$, $\!\!1234\}$, $\!\{12$, $\!\!13$, $\!\!14$, $\!\!123$, $\!\!124$, $\!\!1234\}$, $\!\{12$, $\!\!13$, $\!\!23$, $\!\!123$, $\!\!124$, $\!\!1234\}$, $\!\{12$, $\!\!13$, $\!\!24$, $\!\!123$, $\!\!124$, $\!\!1234\}$, $\!\{12$, $\!\!13$, $\!\!24$, $\!\!123$, $\!\!234$, $\!\!1234\}$, $\!\{12$, $\!\!13$, $\!\!24$, $\!\!123$, $\!\!134$, $\!\!1234\}$, $\!\{12$, $\!\!13$, $\!\!23$, $\!\!124$, $\!\!134$, $\!\!1234\}$, $\!\{12$, $\!\!13$, $\!\!14$, $\!\!123$, $\!\!234$, $\!\!1234\}$, $\!\{12$, $\!\!13$, $\!\!24$, $\!\!134$, $\!\!234$, $\!\!1234\}$, $\!\{12$, $\!\!13$, $\!\!123$, $\!\!124$, $\!\!134$, $\!\!1234\}$, $\!\{12$, $\!\!13$, $\!\!123$, $\!\!124$, $\!\!234$, $\!\!1234\}$, $\!\{12$, $\!\!34$, $\!\!123$, $\!\!124$, $\!\!134$, $\!\!1234\}$, $\!\{12$, $\!\!13$, $\!\!124$, $\!\!134$, $\!\!234$, $\!\!1234\}$, $\!\{12$, $\!\!123$, $\!\!124$, $\!\!134$, $\!\!234$, $\!\!1234\}$, $\!\{12$, $\!\!13$, $\!\!14$, $\!\!23$, $\!\!24$, $\!\!34$, $\!\!123\}$, $\!\{12$, $\!\!13$, $\!\!14$, $\!\!23$, $\!\!24$, $\!\!123$, $\!\!124\}$, $\!\{12$, $\!\!13$, $\!\!14$, $\!\!23$, $\!\!24$, $\!\!123$, $\!\!134\}$, $\!\{12$, $\!\!13$, $\!\!14$, $\!\!23$, $\!\!24$, $\!\!134$, $\!\!234\}$, $\!\{12$, $\!\!13$, $\!\!14$, $\!\!23$, $\!\!123$, $\!\!124$, $\!\!134\}$, $\!\{12$, $\!\!13$, $\!\!14$, $\!\!23$, $\!\!123$, $\!\!124$, $\!\!234\}$, $\!\{12$, $\!\!13$, $\!\!24$, $\!\!34$, $\!\!123$, $\!\!124$, $\!\!134\}$, $\!\{12$, $\!\!13$, $\!\!14$, $\!\!23$, $\!\!124$, $\!\!134$, $\!\!234\}$, $\!\{12$, $\!\!13$, $\!\!14$, $\!\!123$, $\!\!124$, $\!\!134$, $\!\!234\}$, $\!\{12$, $\!\!13$, $\!\!23$, $\!\!123$, $\!\!124$, $\!\!134$, $\!\!234\}$, $\!\{12$, $\!\!13$, $\!\!24$, $\!\!123$, $\!\!124$, $\!\!134$, $\!\!234\}$, $\!\{12$, $\!\!13$, $\!\!14$, $\!\!23$, $\!\!24$, $\!\!34$, $\!\!1234\}$, $\!\{12$, $\!\!13$, $\!\!14$, $\!\!23$, $\!\!24$, $\!\!123$, $\!\!1234\}$, $\!\{12$, $\!\!13$, $\!\!14$, $\!\!23$, $\!\!24$, $\!\!134$, $\!\!1234\}$, $\!\{12$, $\!\!13$, $\!\!14$, $\!\!23$, $\!\!123$, $\!\!124$, $\!\!1234\}$, $\!\{12$, $\!\!13$, $\!\!24$, $\!\!34$, $\!\!123$, $\!\!234$, $\!\!1234\}$, $\!\{12$, $\!\!13$, $\!\!14$, $\!\!23$, $\!\!124$, $\!\!134$, $\!\!1234\}$, $\!\{12$, $\!\!13$, $\!\!14$, $\!\!23$, $\!\!123$, $\!\!234$, $\!\!1234\}$, $\!\{12$, $\!\!13$, $\!\!24$, $\!\!34$, $\!\!123$, $\!\!124$, $\!\!1234\}$, $\!\{12$, $\!\!13$, $\!\!14$, $\!\!23$, $\!\!124$, $\!\!234$, $\!\!1234\}$, $\!\{12$, $\!\!13$, $\!\!14$, $\!\!123$, $\!\!124$, $\!\!134$, $\!\!1234\}$, $\!\{12$, $\!\!13$, $\!\!23$, $\!\!123$, $\!\!124$, $\!\!134$, $\!\!1234\}$, $\!\{12$, $\!\!13$, $\!\!24$, $\!\!123$, $\!\!124$, $\!\!134$, $\!\!1234\}$, $\!\{12$, $\!\!13$, $\!\!14$, $\!\!123$, $\!\!124$, $\!\!234$, $\!\!1234\}$, $\!\{12$, $\!\!13$, $\!\!24$, $\!\!123$, $\!\!134$, $\!\!234$, $\!\!1234\}$, $\!\{12$, $\!\!13$, $\!\!23$, $\!\!124$, $\!\!134$, $\!\!234$, $\!\!1234\}$, $\!\{12$, $\!\!13$, $\!\!123$, $\!\!124$, $\!\!134$, $\!\!234$, $\!\!1234\}$, $\!\{12$, $\!\!34$, $\!\!123$, $\!\!124$, $\!\!134$, $\!\!234$, $\!\!1234\}$, $\!\{12$, $\!\!13$, $\!\!14$, $\!\!23$, $\!\!24$, $\!\!34$, $\!\!123$, $\!\!124\}$, $\!\{12$, $\!\!13$, $\!\!14$, $\!\!23$, $\!\!24$, $\!\!123$, $\!\!124$, $\!\!134\}$, $\!\{12$, $\!\!13$, $\!\!14$, $\!\!23$, $\!\!24$, $\!\!123$, $\!\!134$, $\!\!234\}$, $\!\{12$, $\!\!13$, $\!\!14$, $\!\!23$, $\!\!123$, $\!\!124$, $\!\!134$, $\!\!234\}$, $\!\{12$, $\!\!13$, $\!\!24$, $\!\!34$, $\!\!123$, $\!\!124$, $\!\!134$, $\!\!234\}$, $\!\{12$, $\!\!13$, $\!\!14$, $\!\!23$, $\!\!24$, $\!\!34$, $\!\!123$, $\!\!1234\}$, $\!\{12$, $\!\!13$, $\!\!14$, $\!\!23$, $\!\!24$, $\!\!123$, $\!\!124$, $\!\!1234\}$, $\!\{12$, $\!\!13$, $\!\!14$, $\!\!23$, $\!\!24$, $\!\!123$, $\!\!134$, $\!\!1234\}$, $\!\{12$, $\!\!13$, $\!\!14$, $\!\!23$, $\!\!24$, $\!\!134$, $\!\!234$, $\!\!1234\}$, $\!\{12$, $\!\!13$, $\!\!14$, $\!\!23$, $\!\!123$, $\!\!124$, $\!\!134$, $\!\!1234\}$, $\!\{12$, $\!\!13$, $\!\!14$, $\!\!23$, $\!\!123$, $\!\!124$, $\!\!234$, $\!\!1234\}$, $\!\{12$, $\!\!13$, $\!\!24$, $\!\!34$, $\!\!123$, $\!\!124$, $\!\!134$, $\!\!1234\}$, $\!\{12$, $\!\!13$, $\!\!14$, $\!\!23$, $\!\!124$, $\!\!134$, $\!\!234$, $\!\!1234\}$, $\!\{12$, $\!\!13$, $\!\!14$, $\!\!123$, $\!\!124$, $\!\!134$, $\!\!234$, $\!\!1234\}$, $\!\{12$, $\!\!13$, $\!\!23$, $\!\!123$, $\!\!124$, $\!\!134$, $\!\!234$, $\!\!1234\}$, $\!\{12$, $\!\!13$, $\!\!24$, $\!\!123$, $\!\!124$, $\!\!134$, $\!\!234$, $\!\!1234\}$, $\!\{12$, $\!\!13$, $\!\!14$, $\!\!23$, $\!\!24$, $\!\!34$, $\!\!123$, $\!\!124$, $\!\!134\}$, $\!\{12$, $\!\!13$, $\!\!14$, $\!\!23$, $\!\!24$, $\!\!123$, $\!\!124$, $\!\!134$, $\!\!234\}$, $\!\{12$, $\!\!13$, $\!\!14$, $\!\!23$, $\!\!24$, $\!\!34$, $\!\!123$, $\!\!124$, $\!\!1234\}$, $\!\{12$, $\!\!13$, $\!\!14$, $\!\!23$, $\!\!24$, $\!\!123$, $\!\!124$, $\!\!134$, $\!\!1234\}$, $\!\{12$, $\!\!13$, $\!\!14$, $\!\!23$, $\!\!24$, $\!\!123$, $\!\!134$, $\!\!234$, $\!\!1234\}$, $\!\{12$, $\!\!13$, $\!\!14$, $\!\!23$, $\!\!123$, $\!\!124$, $\!\!134$, $\!\!234$, $\!\!1234\}$, $\!\{12$, $\!\!13$, $\!\!24$, $\!\!34$, $\!\!123$, $\!\!124$, $\!\!134$, $\!\!234$, $\!\!1234\}$, $\!\{12$, $\!\!13$, $\!\!14$, $\!\!23$, $\!\!24$, $\!\!34$, $\!\!123$, $\!\!124$, $\!\!134$, $\!\!234\}$, $\!\{12$, $\!\!13$, $\!\!14$, $\!\!23$, $\!\!24$, $\!\!34$, $\!\!123$, $\!\!124$, $\!\!134$, $\!\!1234\}$, $\!\{12$, $\!\!13$, $\!\!14$, $\!\!23$, $\!\!24$, $\!\!123$, $\!\!124$, $\!\!134$, $\!\!234$, $\!\!1234\}$, $\!\{12$, $\!\!13$, $\!\!14$, $\!\!23$, $\!\!24$, $\!\!34$, $\!\!123$, $\!\!124$, $\!\!134$, $\!\!234$, $\!\!1234\}$
\end{scriptsize}


\bigskip 
\bigskip
\bigskip

\noindent
{\bf Acknowledgments.}
Authors thank Felipe Serrano for valuable technical support with {\tt SCIP}
 \bigskip \bigskip

\renewcommand*{\bibfont}{\small}
\printbibliography

\end{document}